\documentclass{cccg24}
\usepackage{graphicx,amssymb,amsmath}
\usepackage{lineno}
\usepackage{xspace}
\usepackage{url}
\usepackage{hyperref}

\newtheorem{definition}{Definition}
\newtheorem{observation}{Observation}
\newcommand{\match}{{\sc Matching}\xspace}

\usepackage[ruled,vlined,linesnumbered,titlenumbered]{algorithm2e}
\DontPrintSemicolon
\SetKwInput{Input}{Input}
\SetKwInput{Output}{Result}
\SetKw{KwBreak}{break}
\SetKw{Break}{break}
\SetKw{Continue}{continue}
\SetKwRepeat{Do}{do}{while}
\usepackage[disable]{todonotes}

%----------------------- Macros and Definitions --------------------------

% Add all additional macros here, do NOT include any additional files.

% The environments theorem (Theorem), invar (Invariant), lemma (Lemma),
% cor (Corollary), obs (Observation), conj (Conjecture), prop
% (Proposition), and proof are already defined in the cccg19.cls file.
% Add additional environments only if you REALLY need them.

%----------------------- Title -------------------------------------------

\title{Finding maximum matchings in RDV graphs efficiently}

\author{Therese Biedl\thanks{David R. Cheriton School of Computer Science, University of
Waterloo, {\tt biedl@uwaterloo.ca}}
	\and
	Prashant Gokhale \thanks{David R. Cheriton School of Computer Science, University of
Waterloo, {\tt  prashant.gokhale@uwaterloo.ca}}}

% Add the appropriate index information
\index{Author, First}
\index{Researcher, Second}

%------------------------------ Text -------------------------------------

\begin{document}
\thispagestyle{empty}
\maketitle

\begin{abstract}
In this paper, we study the maximum matching problem in \emph{RDV graphs}, i.e., graphs that are \textbf{v}ertex-intersection graphs of \textbf{d}ownward paths in a \textbf{r}ooted tree.   We show that this problem can be reduced to a problem of testing (repeatedly) whether a vertical segment intersects one of a dynamically changing set of horizontal segments, which in turn reduces to an orthogonal ray shooting query.   Using a suitable data structure, we can therefore find a maximum matching in $O(n\log n)$ time (presuming a linear-sized representation of the graph is given), i.e., without even looking at all edges.  
\end{abstract}

\section{Introduction}

The \match~problem is one of the oldest problems in the history of graph theory and graph algorithms:    Given a graph $G=(V,E)$, find a \emph{matching} (a set of pairwise non-adjacent edges) that is \emph{maximum} (has the largest possible number of edges).      See for example extensive reviews of the older history of matchings and its applications in \cite{Berge73, LP86}.
The fastest known algorithm for general graphs runs in $O(\sqrt{n}m)$ time (\cite{MikaliVazirani1980}, see also \cite{Vaz20}).
There have been some recent break-throughs for algorithms for maximum flow, culminating in an algorithm with almost-linear run-time $O(m^{1 + o(1)})$ \cite{CKLPGS23}; this immediately implies an almost-linear algorithm for \match in bipartite graphs. See also \cite{ChuzhoyK23} for a purely combinatorial almost-linear algorithm for the same problem.

\paragraph{Greedy-algorithm and interval graphs.}
Naturally one wonders whether truly linear-time algorithms (i.e., with $O(m+n)$ run-time) exist, at least if the graphs have special properties.   One natural approach for this is to use the greedy-algorithm for \match shown in Algorithm~\ref{alg:greedy_basic}, which clearly takes linear time. With a suitable vertex order this will always find the maximum matching (enumerate the vertices so that matched ones appear consecutively at the beginning); the challenge is hence to find a vertex order (without knowing the maximum matching) for which the greedy-algorithm is guaranteed to work.

\begin{algorithm}[ht]
\caption{Greedy-algorithm for matching}
\label{alg:greedy_basic}
\Input{A graph $G$ with a vertex order $v_1,\dots,v_n$}
Initialize the matching $M=\emptyset$\;
\For{$i=1,\dots,n$}{
	\If{$v_i$ is not yet matched and has unmatched neighbours}{
		among all unmatched 
% higher 
% TB: `higher' isn't defined here.   Also, *all* unmatched neighbours are higher (else we would have added the edge earlier)
		neighbours of $v_i$, let $v_j$ be the one that minimizes $j$ \;
		\tcp{$j>i$, for otherwise $v_j$ would have been matched earlier}
		add $(v_i,v_j)$ to matching $M$\;
	}
}
\KwRet{$M$}
\end{algorithm}

One graph class where this can be done is the \emph{interval graphs}, i.e., the
intersection graphs of horizontal segments in the plane.   
%According to Liang and Rhee \cite{liang1993}, 
It was shown by Moitra and Johnson \cite{MoitraJ89}  that the
greedy-algorithm always finds a maximum matching in an interval graph as
long as we sort the vertices by left endpoint of their intervals.   
This gives an $O(m+n)$ algorithm for interval graphs since an interval representation can be found in $O(m+n)$ time \cite{BL76}.  Liang and Rhee \cite{liang1993} improve this further (presuming an interval representation is given) by rephrasing
the greedy-algorithm as follows (see also Algorithm~\ref{alg:greedy_delay}).   
Rather than adding an edge to the matching when the left endpoint (i.e., the one with the smaller index) is encountered, we add it when the right endpoint is encountered.    We also explicitly maintain a data structure $F$ that stores the \emph{free} vertices, by which we mean vertices that were processed already but are as-of-yet unmatched.       Liang and Rhee \cite{liang1993} implement
$F$ with a balanced binary search tree (storing left endpoints of intervals);
then all operations required on $F$ can be performed in $O(\log n)$ time.
This therefore leads to an $O(n\log n)$ time algorithm for solving \match
in interval graphs; in particular this is \emph{sub-linear} run-time if the
graph has $\omega(n\log n)$ edges. This runtime can be easily improved to $O(n \log{\log{n}})$ by using a van Emde Boas tree \cite{vEB}, as observed by Liang and Rhee \cite{Rhee1995}.

\begin{algorithm}[ht]
\caption{Delayed-Greedy-algorithm}
\label{alg:greedy_delay}
\Input{A graph $G$ with a vertex order $v_1,\dots,v_n$}
Initialize the matching $M=\emptyset$\;
Initialize the set of free vertices $F=\emptyset$\;
\For{$j=1,\dots,n$}{
	\eIf{$v_j$ has neighbours in $F$}{
		among such neighbours, let $v_i$ be the one that minimizes $i$\;
		add $(v_i,v_j)$ to matching $M$\;
		delete $v_i$ from $F$
	}
	{
		add $v_j$ to $F$
	}
}
\KwRet{$M$}
\end{algorithm}

\paragraph{Our results.}
In this paper, we take inspiration from \cite{liang1993} and develop sub-linear algorithms for \match in \emph{RDV graphs}, i.e., graphs that can be represented as vertex-intersection graphs of downward paths in a rooted tree $T$. (This is called an \emph{RDV representation}; formal definitions will be given below.)    
RDV graphs were introduced by Gavril \cite{Gavril75}; many properties
have been discovered and for many problems efficient algorithms have been
found for RDV graphs \cite{BabelPT96, LinS14, LinT15, PT24}, quite frequently in contrast to only slightly bigger graph classes where the problem turns out to be hard. It is easy to see that all interval graphs are RDV graphs, so our results re-prove the results for interval graphs from \cite{liang1993}.

RDV graphs can be recognized in polynomial time, and along the way 
an RDV representation is produced \cite{Gavril75}.
(The run-time has been improved, and even a linear-time algorithm has been claimed but without published details;    see \cite[Section 2.1.4]{ChaplickThesis} for more on the history.)

We show in this paper that if we are given an $n$-vertex
graph $G$ with an RDV representation on a tree $T$, then we can  
find a maximum matching in $O(|T|+n\log n)$ time.   There always
exists an RDV representation of $G$ with $|T|\in O(n)$, so if
we are given a suitable one then the run-time becomes $O(n\log n)$,
hence sub-linear.    

Our idea is to use the delayed-greedy-algorithm (Algorithm~\ref{alg:greedy_delay}), and to pick a suitable data structure for the set $F$ of free vertices.    
The key ingredient here is that `does $v_j$ have a neighbour in $F$' can
be re-phrased, using the RDV representation, as the 
question whether a vertical segment intersects an element of a dynamically changing set of 
horizontal line segments, and if so, to return the one with maximal 
$y$-coordinate among them.    This question in turn can be phrased as an orthogonal
ray-shooting query, for which suitable data structures are known to exist.
The current best implementation of them uses linear space and $O(\log n)$ time per operation
\cite{GiyoraK09}; this gives our result since we need $O(n)$ operations.
We use the ray-shooting data structure as a black box, so if the run-time
were improved (e.g.~one could dream of
$O(\log \log n)$ run-time if coordinates are integers in $O(n)$, as
they are in our application)
then the run-time of our matching-algorithm would likewise improve. Finally, we also study some possible improvements and extensions. 

\paragraph{Other related results: }   There are a number of other results concerning fast algorithms to solve \match in intersection graphs of some geometric objects.   The results for interval graphs were extended to \emph{circular arc graphs} (intersection graphs of arcs of a circle) \cite{liang1993}.    In an entirely different approach, \match can also be solved very efficiently in \emph{permutation graphs} (intersection graphs of line segments connecting two parallel lines) \cite{Rhee1995}; see also \cite{BKKMSU} for an (unpublished) matching-algorithm for permutation graphs that is slower but beautifully uses range queries to find the matching.
We should note that RDV graphs are unrelated to circular arc graphs and permutation graphs (i.e., neither a subclass nor a superclass); Figure~\ref{fig:RDV_example} gives a specific example.   As such, these results do not directly impact ours or vice versa.

Permutation graphs are a special case of \emph{co-comparability graphs}, i.e., graphs for which the complement has an acyclic transitive orientation; these can also be viewed as intersections of curves between two parallel lines \cite{GRU83}.   
For these, maximum matchings can be found in linear time \cite{Mertzios2018}.   

Finally, the greedy-algorithm actually works beyond interval graphs;
in particular Dahlhaus and Karpinski \cite{DAHLHAUS199879}
showed that it finds the maximum matching
for \emph{strongly chordal graphs}.
These are the graphs that are \emph{chordal} (every cycle $C$ of length at least 4 has a
\emph{chord}, i.e., an edge between two non-consecutive vertices of $C$),
and where additionally every even-length cycle $C$ has a chord $(v,w)$ such that
an odd number of edges of $C$ lie between $v$ and $w$.
The question whether chordal graphs have a linear-time
algorithm for \match remains open. But likely the answer is no, because
as argued in \cite{DAHLHAUS199879}, a linear-time algorithm for testing the existence of a \emph{perfect matching}
(i.e., a matching of size $n/2$) in a chordal graph
would imply a linear-time algorithm for the same problem in any bipartite graph that
is \emph{dense} (has $\Theta(n^2)$ edges).

\begin{table}[ht]
\begin{tabular}{@{}c|c|c@{}}
 \textbf{Graph Class} & \textbf{Runtime} & \textbf{Reference} \\ 
\hline
 Interval graphs & $O(n \log{n})$  & \cite{liang1993} \\  
 Interval graphs & $O(n \log\log n)$  & Section~\ref{sec:interval} \\  
 Circular arc graphs &  $O(n \log{n})$ & \cite{liang1993}\\
 Permutation graphs & $O(n \log{\log{n}})$ & \cite{Rhee1995} \\
 Strongly chordal graphs & $O(n + m)$ & \cite{DAHLHAUS199879} \\
 Co-comparability graphs & $O(n + m)$ & \cite{Mertzios2018}\\
 RDV graphs & $O(n \log{n})$ & Section~\ref{sec:RDV} \\
\end{tabular}
\caption{Existing and new results for \match in some classes of
graphs, presuming a suitable intersection representation is given and sufficiently small.}
\label{ta:overview}
\end{table}

Table~\ref{ta:overview} gives an overview of existing and new results
for \match in some classes of intersection graphs of objects.    
Our paper is structured as follows.    After reviewing some background in
Section~\ref{sec:background}, 
we give our main result for RDV graphs in Section~\ref{sec:RDV}.
We briefly discuss interval graphs, as well as other possible
extensions and open problems
in Section~\ref{sec:AdditionalResults}.

\section{Background}
\label{sec:background}

In this paper  we study vertex-intersection graphs of subtrees of trees.   We first define this formally, and then restrict the attention to a specific subclass.

\begin{definition} Let $G$ be a graph.   A \emph{representation of $G$ as a vertex-intersection graph of subtrees of a tree} consists of a \emph{host-tree} $T$ and, for each vertex $v$ of $G$, a subtree $T(v)$ of $T$ such that $(v,w)$ is an edge of $G$ if and only if $T(v)$ and $T(w)$ share at least one node of $T$.
\end{definition}

Such an intersection representation is sometimes also called a \emph{clique-tree}, and slightly abusing notation, we use the word clique-tree also for the host tree $T$ where convenient.
As convention, we use the term `node' for the vertices of the clique-tree, to distinguish them from the vertices of the graph represented by it.
It is well-known that a graph has a clique-tree if and only if it is chordal \cite{Gavril74}.   We now review some properties of clique-trees that have been rooted.

\begin{definition}
    Let $G$ be a graph with a rooted clique-tree $T$.    For any vertex $v$, let $t(v)$ be the topmost (closest to the root) vertex in the subtree $T(v)$ of $v$.
    A \emph{bottom-up enumeration}  of $G$ is a vertex order obtained by sorting vertices by decreasing distance of $t(v)$ to the root, breaking ties arbitrarily.
\end{definition}

Note that this bottom-up enumeration can be computed in $O(|T|+n)$ time, presuming every vertex $v$ stores a reference to $t(v)$.

It will be convenient to assign points in $\mathbb{R}^2$ to the nodes of clique-tree $T$ as follows.   First, fix an arbitrary order of children at each node, and then enumerate the leaves of $T$ as $L_1,\dots,L_\ell$ from left to right.    For every node $i$ in $T$, let $\ell(i)$ be the leftmost (i.e., lowest-indexed) leaf that is a descendant of $i$, and set $x(i)$ to be the index of $\ell(i)$.    We also need the notation $r(i)$ for the rightmost leaf that is a descendant of $i$.
%\todo{do we need the concept of shadow?   If so, define it here}
Also, define $y(i)$ to be the distance of node $i$ from the root of the clique-tree.   Figure~\ref{fig:RDV_example} shows each node $i$ drawn at point $(x(i),y(i))$ (where $y$-coordinates increase top-to-bottom).    We can compute $x(\cdot)$ with a post-order traversal and $y(\cdot)$ with a BFS-traversal of host-tree $T$ in $O(|T|)$ time.

\begin{figure}[ht] 
\hspace*{\fill}\includegraphics[width=0.6\linewidth,page=3]{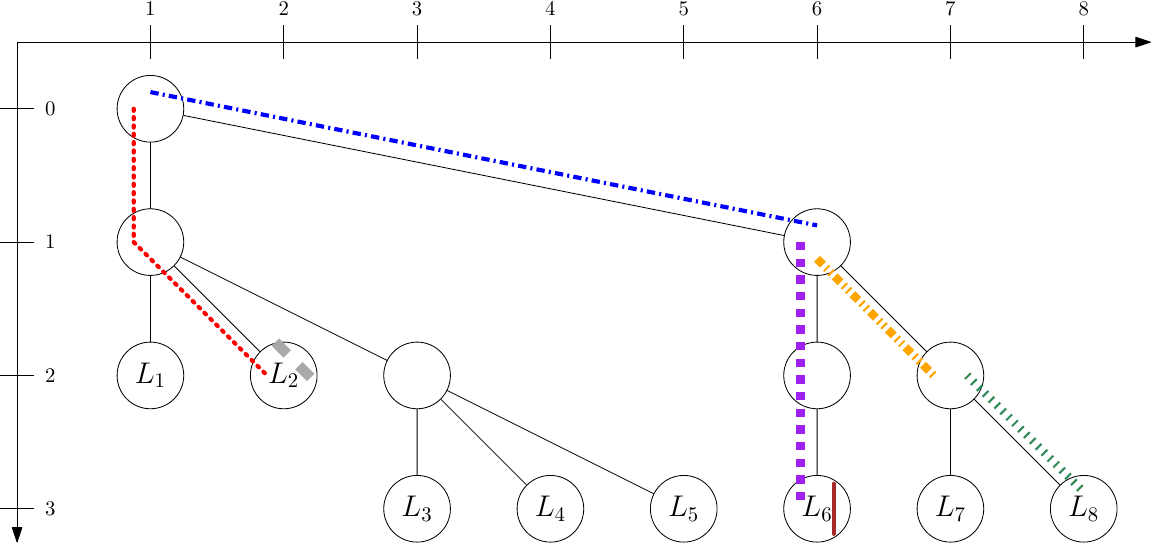} \\
\includegraphics[width=0.99\linewidth,page=1]{RDV_notInterval.pdf}
\caption{An RDV graph together with one possible RDV representation (for illustrative purposes the clique-tree is much bigger than needed).    Nodes are drawn at their coordinates, and vertices are enumerated in bottom-up order.   The graph is neither a circular arc graph nor a permutation graph.}
\label{fig:RDV_example}
\end{figure}

\paragraph{RDV graphs and friends:}
Numerous subclasses of chordal graphs can be defined by studying graphs that have a clique-trees with particular properties.   
Most prominent here is the idea to require that $T(v)$ is a path.    This gives the \emph{path graphs} (also known as \emph{VPT graphs}).
One can further restrict the paths to be directed (after imposing some edge-directions onto the clique-tree); these are the
\emph{directed path graphs}.    One can restrict this even further by requiring that the directions of the clique-tree are
obtained by rooting the clique-tree, and this is the graph class that we study.

\begin{definition} A \emph{rooted directed path graph} (or \emph{RDV graph} \cite{MonmaW86})
is a graph that has an \emph{RDV representation}, i.e., a clique-tree that has been rooted and for every vertex $v$ the subtree $T(v)$ is a \emph{downward path}, i.e., a path that begins at some node and then always goes downwards.
\end{definition}

See Figure~\ref{fig:RDV_example} for an example of an RDV representation.%
\footnote{`RDV' comes from `rooted directed vertex-intersection'.   Gavril called these `directed path graphs' \cite{Gavril75}, but this later got used for the more general graphs where the directions need not be obtained via rooting.}
% TB: we said this already in the intro
%One can argue (we give details in the appendix) that the corresponding RDV graph is not a circular arc graph (therefore also not an interval graph), and not a permutation graph, 
%Therefore our results for RDV graphs do not follow from existing ones.
\section{Matching in RDV graphs}
\label{sec:RDV}

Assume for the rest of this section that we are given an RDV representation of a graph $G$.
In what follows, we will often use `$P(v)$' in place of `$T(v)$' for the subtree of a vertex $v$, to help us remind ourselves that these are downward paths rather than arbitrary trees.
Recall that $t(v)$ denotes the top (closest to the root) node of $P(v)$; because we have a downward path (rather than an arbitrary tree) representing $v$ we can now also define $b(v)$ to be the bottom node of $P(v)$.   

For run-time purposes we presume that `the RDV representation is given' means that we have a rooted tree $T$ and (for each vertex $v$ of $G$) two references $b(v)$ and $t(v)$ to the nodes of $T$ that define the downward path.

Farber \cite{FarberThesis} showed that RDV graphs are strongly chordal,  and
Dahlhaus and Karpinski \cite{DAHLHAUS199879} 
showed that the greedy algorithm works correctly on strongly chordal graphs if we consider vertices in a so-called \emph{strong elimination order} (which is usually assumed to be given with a strongly chordal graph).   This suggests that the
greedy-algorithm works for RDV graphs, but there is one missing piece: How do
we get  a strong elimination order from an RDV representation efficiently?
This is very easy (use the bottom-up enumeration), and the proof that it works
is not hard, but requires some more definitions and is therefore delayed to
the appendix.

\begin{theorem}
\label{thm:RDVstronglyChordal}
    Let $G$ be a graph with a given RDV representation. Then the greedy matching algorithm, applied to a bottom-up enumeration, returns a maximum matching. 
\end{theorem}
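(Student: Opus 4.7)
The plan is to establish that the bottom-up enumeration is a \emph{strong elimination order} for $G$, and then invoke the theorem of Dahlhaus and Karpinski~\cite{DAHLHAUS199879} stating that the greedy matching algorithm, run on any strong elimination order of a strongly chordal graph, returns a maximum matching (recall RDV graphs are strongly chordal by Farber~\cite{FarberThesis}). A convenient equivalent formulation (originally due to Farber) is that $v_1,\dots,v_n$ is a strong elimination order iff, for every $i$, the vertex $v_i$ is \emph{simple} in the induced subgraph on $\{v_i,\dots,v_n\}$, meaning that the family $\{N[u]\cap N[v_i] : u\in N[v_i]\}$ is linearly ordered by inclusion.

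The key step is to show that the first vertex $v_1$ of the bottom-up enumeration --- a vertex whose top $t(v_1)$ has maximum depth in the host-tree --- is simple in $G$. For any neighbor $w$ of $v_1$, the top $t(w)$ lies at depth at most that of $t(v_1)$, and any node $u\in P(v_1)\cap P(w)$ is a descendant of both $t(v_1)$ and $t(w)$. Since the ancestors of $u$ form a chain, $t(w)$ must be an ancestor of $t(v_1)$, so the downward path $P(w)$ from $t(w)$ to $u$ passes through $t(v_1)$; hence $t(v_1)\in P(w)$. Consequently, any two vertices $u,x\in N[v_1]$ share the node $t(v_1)$ on their paths and are therefore adjacent, which gives $N[v_1]\subseteq N[u]$ and hence $N[u]\cap N[v_1]=N[v_1]$ for every $u\in N[v_1]$. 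The whole family of intersections collapses to a single set, so $v_1$ is trivially simple.

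The induction is then immediate: deleting the path $P(v_1)$ from the RDV representation leaves an RDV representation of $G-v_1$ whose bottom-up enumeration is precisely $v_2,\dots,v_n$, so the same argument shows $v_2$ is simple in $G-v_1$, and so on, establishing that the bottom-up enumeration is a strong elimination order. The main obstacle is really just bookkeeping with the strongly-chordal definitions --- recalling Farber's notion of a simple vertex, relating it to the strong-elimination property needed by Dahlhaus--Karpinski, and verifying that the bottom-up order restricted to a vertex-deleted subgraph is itself a bottom-up order. The underlying geometric content (that a path meeting a deepest-top path must contain its top node) is a direct consequence of downward paths living on a rooted tree and does not require any delicate argument.
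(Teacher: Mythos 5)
Your overall strategy---show that the bottom-up enumeration is an elimination order built from simple vertices and then invoke Dahlhaus--Karpinski via Farber---is exactly the paper's, and your geometric argument that $t(v_1)\in P(w)$ for every neighbour $w$ of the deepest-top vertex $v_1$, as well as the induction step (delete $P(v_1)$; what remains is an RDV representation of $G-v_1$ whose bottom-up order is $v_2,\dots,v_n$), also match the paper.

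There is, however, a genuine gap in the middle step. Farber's notion of a \emph{simple} vertex requires that the family of \emph{full} closed neighbourhoods $\{N[u] : u\in N[v_1]\}$ be linearly ordered by inclusion; your ``equivalent formulation'' using $\{N[u]\cap N[v_1] : u\in N[v_1]\}$ is strictly weaker. What you actually verify---that any two vertices of $N[v_1]$ are adjacent, hence $N[u]\cap N[v_1]=N[v_1]$ for every $u\in N[v_1]$---is precisely the statement that $v_1$ is \emph{simplicial}. Iterating that yields only a perfect elimination order, which is not enough: the greedy matching algorithm is not known to succeed on arbitrary perfect elimination orders of chordal graphs (the paper explicitly discusses why a fast matching algorithm for general chordal graphs is unlikely), and Farber's Theorem 3.3 together with the Dahlhaus--Karpinski result genuinely needs the nesting of the neighbourhoods $N[u]$ outside $N[v_1]$ as well. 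As a concrete illustration of the non-equivalence, take a triangle $a,v,b$ with one pendant vertex attached to $a$ and another attached to $b$: then $N[a]\cap N[v]=N[b]\cap N[v]=N[v]$, so your condition holds at $v$, yet $v$ is not simple because $N[a]$ and $N[b]$ are incomparable. The missing piece is exactly what the paper's lemma supplies: since $v_1$ has the deepest top, all host-tree nodes strictly deeper than $t(v_1)$ can be discarded without destroying any adjacency, after which every neighbour's path is a downward path ending at $t(v_1)$, i.e., a chain of ancestors of $t(v_1)$; these paths are therefore totally ordered by containment, and $P(w)\subseteq P(w')$ implies $N[w]\subseteq N[w']$. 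Sorting the neighbours by decreasing depth of their tops then gives the required chain of full closed neighbourhoods. You need to add this (or an equivalent) argument to close the gap.
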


%Assuming this theorem holds, 
Exactly as in \cite{liang1993}, to achieve a sub-linear  run-time we will not use the greedy-algorithm directly but instead use the equivalent delayed-greedy-algorithm (Algorithm~\ref{alg:greedy_delay}).
The main bottleneck for the run-time is then to implement a data structure for the set $F$ of free vertices. Such a data structure should store indexed vertices and must support 
%in $O(\log{n})$ time 
the following three operations:
\begin{itemize}
    \item [A.] insert a new vertex
    \item [B.] delete a vertex
    \item [C.] query for the smallest neighbour, i.e., given a vertex $v_j$ not in $F$, either determine that $v_j$ has no neighbours in $F$, or 
return the neighbour $v_i$ of $v_j$ in $F$ that minimizes index $i$.
\end{itemize}

%\todo{FYI: It really makes no sense to talk about Giyora and Kaplan until we have defined why segments are of interest to us.   So I have exchanged the two lemma and the theorem.}

The first two operations are straightforward, but the third one is non-trivial if we want to use $o(\mathit{degree}(v_j))$ time.
To this end, we reduce adjacency queries in an RDV graph to the question of whether a horizontal segment intersects a vertical segment.   We need some definitions first.

\begin{definition}
Let $G$ be a graph with an RDV representation. For each vertex $v$, define the following (see Figure~\ref{fig:RDV_horsegment} for examples): 
\begin{itemize}
\item The \emph{horizontal segment} $s(v)$ of $v$ is the segment
between the point of $t(v)$ and $\left( x(r(t(v))) , y(t(v)) \right)$, i.e., it
extends rightward until it is above the rightmost descendant of $t(v)$. 
\item The \emph{vertical segment} 
$q(v)$ of $v$ is the segment between the point of $b(v)$ and $(x(b(v)), y(t(v))$, i.e., it extends upward 
until it is to the right of $t(v)$. 
\end{itemize}
\end{definition}

% TB: This should not be inside a definition; a definition only says what things are, but any discussion of their properties should be outside.
Recall that $t(v)$ has the same $x$-coordinate as its leftmost descendant, so the $x$-range of segment $s(v)$ is exactly the range of $x$-coordinates among descendants of $t(v)$.    We also note that the name `$q$' for the vertical segment was chosen since this will be used to implement the query operation.

\begin{figure}[ht]
\centering
\includegraphics[width=0.99\linewidth,page=2]{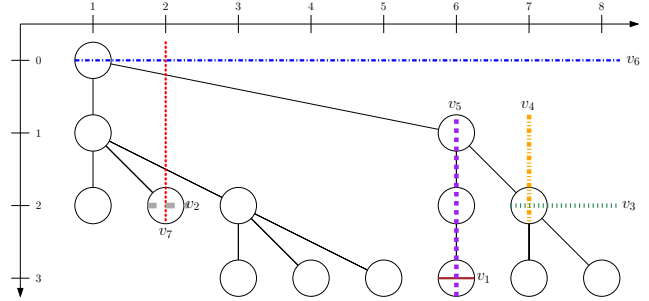}
\caption{Mapping the vertices of the example in Figure~\ref{fig:RDV_example} to horizontal and vertical segments (not all are shown).}
\label{fig:RDV_horsegment}
\label{fig:RDV_versegment}
\end{figure}

%\todo[inline,color=red,caption={}]{TB: NONSENSE!    \\
%PG: I fixed Def 4, Theorem 2 and added a query C''. The main idea is the observation - which i think is correct.}
 
\begin{theorem}
\label{thm:RDVgeometric}
Let $G$ be a graph with an RDV representation and let $v_1,\dots,v_n$ be the bottom-up enumeration of vertices.     Then for any $i<j$,
edge $(v_i,v_j)$ exists if and only if the vertical segment $q(v_j)$ intersects the horizontal segment $s(v_i)$.
\end{theorem}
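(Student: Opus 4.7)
The plan is to translate both conditions ``$q(v_j)$ and $s(v_i)$ intersect'' into statements about the tree and show each direction using the bottom-up ordering. The segment intersection is equivalent to the conjunction of (A) $x(t(v_i))\le x(b(v_j))\le x(r(t(v_i)))$ and (B) $y(t(v_j))\le y(t(v_i))\le y(b(v_j))$. The whole argument will rest on two facts that I would state and use repeatedly: (i) because $i<j$ in the bottom-up enumeration, $y(t(v_i))\ge y(t(v_j))$; and (ii) by the definition of $x(\cdot)$ using leftmost leaf descendants, $x(t(v_i))\le x(u)\le x(r(t(v_i)))$ holds for a node $u$ of $T$ exactly when the leftmost leaf descendant $\ell(u)$ of $u$ is a descendant of $t(v_i)$, which in turn forces either $u$ to descend from $t(v_i)$ or vice versa (they share a leaf so they lie on a common root-to-leaf path).

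For the forward direction I would start from a node $u\in P(v_i)\cap P(v_j)$ witnessing the edge. Since $t(v_i)$ and $t(v_j)$ are both ancestors of $u$ and $t(v_i)$ is not shallower than $t(v_j)$ by (i), $t(v_i)$ lies on the sub-path of $P(v_j)$ from $t(v_j)$ to $u$; in particular $t(v_i)\in P(v_j)$. This immediately gives (B). For (A), note that $t(v_i)$ is above $b(v_j)$ on $P(v_j)$ (its depth is at most $y(b(v_j))$), so $b(v_j)$ is a descendant of $t(v_i)$, and hence $\ell(b(v_j))$ is a leaf descendant of $t(v_i)$; reading off $x$-coordinates yields (A).

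For the reverse direction I would assume (A) and (B) and aim to prove $t(v_i)\in P(v_j)$, which would suffice since $t(v_i)\in P(v_i)$ already. From (A) and fact (ii), $\ell(b(v_j))$ is a descendant of $t(v_i)$, so $t(v_i)$ and $b(v_j)$ share a leaf descendant and one is an ancestor of the other on the root-to-$\ell(b(v_j))$ path. The right inequality in (B) rules out $b(v_j)$ being a strict ancestor of $t(v_i)$, so $t(v_i)$ is an ancestor of $b(v_j)$ (or equals it). Combining this with the left inequality of (B), $t(v_i)$ lies on the downward path from $t(v_j)$ to $b(v_j)$, which is exactly $P(v_j)$.

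The main obstacle I anticipate is the bookkeeping around fact (ii): going carefully from the $x$-coordinate inequality to an ancestor/descendant relation without accidentally assuming the wrong direction of ancestry. The depth inequalities coming from the bottom-up ordering and from (B) are precisely what disambiguates this, so I would be careful to keep the two ingredients (geometric comparison of coordinates and bottom-up enumeration) clearly separated in the write-up, and to emphasise that in both directions the actual common node of $P(v_i)$ and $P(v_j)$ turns out to be $t(v_i)$ itself.
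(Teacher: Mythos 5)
Your proposal is correct and follows essentially the same route as the paper: translate the segment intersection into the two coordinate inequalities, and show in both directions that the common node of $P(v_i)$ and $P(v_j)$ must be $t(v_i)$, using the bottom-up order to get $y(t(v_i))\ge y(t(v_j))$. The only cosmetic difference is in the converse, where the paper argues by contradiction via the disjointness of leaf-index intervals of distinct nodes at the same depth, while you argue positively that $t(v_i)$ and $b(v_j)$ are comparable because they share the leaf descendant $\ell(b(v_j))$ --- both rest on the same structural fact about leaf intervals of subtrees.
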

\begin{proof}
Since $q(v_j)$ is a vertical segment and $s(v_i)$ is a horizontal segment, they intersect if and only if both the $x$-coordinates
and $y$-coordinates line up correctly, i.e., $x(t(v_i)){=}x(\ell(t(v_i)))\leq x(b(v_j)) \leq x(r(t(v_i)))$ and $y(t(v_j)) \leq y(t(v_i)) \leq y(b(v_j))$.   

%\todo[color=green]{TB: FYI: You had a few separate claims here where I didn't see why they are needed; they are still in latex if there is a good reason to keep them. PG: This looks good too. I kept the claims because that's how I thought about the proof.}
%If the $x$-coordinates line up correctly, then there is a leaf node (for example $\ell(b(v_j))$ works) that is a descendant of both $t(v_i)$ and $b(v_j)$. This implies that $t(v_i)$ and $b(v_j)$ lie on the unique path from $u$ to the root, which in turn implies that one among $t(v_i)$ and $b(v_j)$ is an ancestor of the other. 

%Conversely, if one among $t(v_i)$ and $b(v_j)$ is an ancestor of the other, the $x$-coordinates clearly line up. 
%
%This leads to the following observation,
%
%\textbf{Obs:} The $x$-coordinates line up if and only if either $b(v_j)$ is a descendant of $t(v_i)$ or $b(v_j)$ is an ancestor of $t(v_i)$. 
%\todo{TB: here is where the proof is wrong: the if-and-only if only holds if we know that $b(v_j)$ is not a strict ancestor of $t(v_i)$. PG: I think this new observation is correct, please have a look.}
%%(Note that $b(v_j)$ cannot be a strict
%ancestor of $t(v_i)$ since we used the bottom-up enumeration and $j>i$.)

Assume first that edge $(v_i,v_j)$ exists, which means that $P(v_i)$ and $P(v_j)$ have a node $u$ in common.   Among all such nodes $u$, pick the one that is closest to the root; this implies $u\in \{t(v_i),t(v_j)\}$.   By $i<j$ we actually know $u=t(v_i)$, because if $u\neq t(v_i)$ then $u{=}t(v_j)$ would be a strict descendant of $t(v_i)$ and have larger $y$-coordinate, contradicting the bottom-up elimination ordering.
Since $u\in P(v_j)$, node
$b(v_j)$ is a descendant of $u$ (which in turn is a descendant of $t(v_j)$), so $y(t(v_j))\leq y(u){=}y(t(v_i)) \leq y(b(v_j))$ and the $y$-coordinates line up. 
The $x$-coordinates line up since $b(v_j)$ is a descendant of $t(v_i){=}u$
and the horizontal segment $s(v_i)$ covers all such descendants.

Assume now that the segments intersect.
By $y(t(v_j))\leq y(t(v_i))\leq y(b(v_j))$ then path $P(v_j)$ contains a node (call it $u$) with $y(u)=y(t(v_i))$.  If $u$ equals $t(v_i)$ then $P(v_i)$ and $P(v_j)$ have node $t(v_i)$ in common and $(v_i,v_j)$ is an edge as desired.   If $u\neq t(v_i)$, then these two nodes (with the same $y$-coordinate) have a disjoint set of descendants, so the intervals $I_u=[x(\ell(u)),x(r(u))]$ and $I_i=[x(\ell(t(v_i))),x(r(t(v_i)))]$ are disjoint.  Since $b(v_j)$ is a descendant of $u\in P_j$, we have $x(b(v_j))\in I_u$, but since the $x$-coordinates line up we have $x(b(v_j))\in I_i$.    This is impossible.
% it lies in $but not of $t(v_i)$.    This implies $x(r(t(v_i))<x(\ell(u))\leq x(b(v_j))$, impossible since the segments intersect.
\todo[color=green]{FYI: I rewrote the proof a bit, the previous version assumed $x(r(t(v_i))<x(\ell(u)$ but we dont know that.}
\end{proof}

In light of this insight, we now can reformulate our requirements on a data structure for $F$ as follows.
We want to store horizontal segments (associated with vertices of a graph) and must be able to support 
%in $O(\log{n})$ time 
the following three operations:
\begin{itemize}
    \item [A'.] insert a new horizontal segment,
    \item [B'.] delete a horizontal segment,
    %\item [C'.] query whether a vertical segment $q(v_j)$ intersects a segment $s(v_i)$ in $F$, Furthermore, among all such segments, return the one with the largest $y$-coordinate.
    \item [C'.] query whether a vertical segment $q(v_j)$ intersects a segment in $F$, and if so, among all intersected segments return the segment $s(v_i)$ that maximizes the $y$-coordinate.
\end{itemize}

%Conveniently, we know that all our horizontal segments have $y$-coordinate $|T|{-}1$ or less (in fact, it is no more than the height of the clique-tree).   Therefore 
We can reformulate C' as a ray-shooting query as follows.   Replace the vertical segment $q(v_j)$ by a vertical ray $\vec{q}(v_j)$ obtained by directing $q(v_j)$ upward.   (So the ray originates at the point of $b(v)$ and goes vertically towards smaller $y$-coordinates.)

\begin{observation}
To perform operation $C'$, it suffices to do the following:
\begin{itemize}
    \item [C''.] 
   %query whether a ray $\vec{q}(v_j)$ intersects a segment $s(v_i)$ in $F$. Furthermore, among all such segments, return the first one hit by the ray.
   query whether a ray $\vec{q}(v_j)$ intersects a segment in $F$, and if so, among all intersected segments return the first segment $s(v_i)$ that is hit by the ray.
\end{itemize}
\end{observation}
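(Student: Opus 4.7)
The plan is to implement $C'$ by a single call to $C''$ followed by a constant-time threshold test. First I would spell out the geometric relationship between $q(v_j)$ and $\vec{q}(v_j)$: the ray originates at $b(v_j)$, which is the endpoint of $q(v_j)$ with the \emph{largest} $y$-coordinate (since $y$ increases top-to-bottom), and then travels in the direction of decreasing $y$. Hence $q(v_j)$ is the initial portion of $\vec{q}(v_j)$, and the ray extends past the top endpoint of $q(v_j)$ at $y$-coordinate $y(t(v_j))$. Because every horizontal segment $s(v_i)\in F$ lies on the line $y=y(t(v_i))$, the ``first-hit'' segment returned by $C''$ is precisely the segment of $F$ whose $y$-coordinate is \emph{maximum} among all segments of $F$ intersected by $\vec{q}(v_j)$.

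To perform $C'$, I would simply invoke $C''$ on $\vec{q}(v_j)$ and post-process as follows. If $C''$ reports no intersection, then in particular no point of $q(v_j)\subset \vec{q}(v_j)$ lies on a segment of $F$, so $C'$ returns ``no intersection''. Otherwise $C''$ returns a segment $s(v_i)$ at height $y(t(v_i))$; I would then compare $y(t(v_i))$ with $y(t(v_j))$. In the case $y(t(v_i))\ge y(t(v_j))$, the hit lies within the $y$-range $[y(t(v_j)),y(b(v_j))]$ of $q(v_j)$, so $s(v_i)$ is actually intersected by $q(v_j)$, and by the maximality noted above it is the correct answer for $C'$. In the case $y(t(v_i))<y(t(v_j))$, every segment of $F$ hit by $\vec{q}(v_j)$ has $y$-coordinate at most $y(t(v_i))<y(t(v_j))$, so none of them is intersected by $q(v_j)$, and $C'$ correctly returns ``no intersection''.

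The argument is essentially an unpacking of the direction convention and I do not foresee any real obstacle: the key observation is simply that ``first hit along the decreasing-$y$ ray'' coincides with ``largest-$y$ hit'', which is the very quantity $C'$ is asked to maximize. Once that is in hand, the only bookkeeping is the constant-time test to decide whether the top-$y$ ray-hit actually lies within the sub-segment $q(v_j)$ rather than strictly beyond its top endpoint.
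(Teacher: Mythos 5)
Your proof is correct, and the central geometric point -- that the first segment hit by the upward ray is exactly the one maximizing the $y$-coordinate -- is the same as in the paper. Where you diverge is in handling the fact that the ray $\vec{q}(v_j)$ overshoots the top endpoint of $q(v_j)$: you add a constant-time post-processing test comparing $y(t(v_i))$ against $y(t(v_j))$ and discard the answer if the hit lies strictly above $q(v_j)$. The paper instead invokes an invariant of the algorithm: at query time $F$ contains only segments of vertices $v_i$ with $i<j$, and by the bottom-up enumeration every such segment satisfies $y(t(v_i))\ge y(t(v_j))$, so any segment of $F$ hit by the ray is automatically hit by the segment $q(v_j)$ as well and the raw output of $C''$ \emph{is} the answer to $C'$ with no further test. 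Your version buys robustness -- the reduction works for an arbitrary set of stored segments, not just the ones arising in this algorithm -- at the cost of not quite establishing the observation as literally stated (that $C''$ alone suffices); the paper's version is tighter but leans on a property of $F$ that you did not identify. For the $O(\log n)$-per-operation bound both arguments are equally good, so this is a matter of exposition rather than a gap.
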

\begin{proof}
At the time of the query, the set $F$ of free vertices contains only segments of vertices $v_i$ with $i<j$.
Therefore all segments intersected by ray $\vec{q}(v_j)$ have $y$-coordinate at least $y(t(v_j))$, and also intersect the segment $q(v_j)$.   So we will only report correct segments. Since the ray is vertically upward (while $y$-coordinates increase in downward direction), the first segment that is hit is the one that maximizes the $y$-coordinate.
\end{proof}

Operation C'' 
is the well-known \emph{orthogonal ray-shooting} problem, and operations A' and B' means that we want a dynamic variant. Many data structures have been developed for this (some for more general versions), see for
example \cite{1990MehlhornNaher}, \cite{2003KaplanMoladTarjan} for older results with slower processing time.
For the orthogonal ray shooting problem specifically, the best run-time bounds achieved are by Giyora and Kaplan \cite{GiyoraK09}, who
showed how to implement all three operations in $O(\log n)$ time, using $O(n)$ space (assuming the data structure stores up to $n$ items).
Later on this was generalized to drop the requirement of orthogonality \cite{2021Nekrich} without affecting space or runtime.
Some of these data structures assume that the line segments are disjoint. The horizontal segments we have defined earlier are not necessarily
disjoint, but we can make them disjoint (without affecting the outcome) by adding $\tfrac{n-i}{n}$ to the $y$-coordinate of $s(v_i)$. 
\todo{FYI: need to make segments disjoint!   For future rewrites (e.g. thesis) we should probably do this right at the definition
already, and then adjust Theorem 2's proof.}

With this, we can put everything together into our main theorem.

\begin{theorem}
\label{thm:RDVmain}
Given an $n$-vertex graph $G$ with an RDV representation $T$, the maximum matching of $G$ can be found in $O(|T|+n\log n)$ time.
\end{theorem}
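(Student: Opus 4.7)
The plan is to execute the delayed-greedy algorithm (Algorithm~\ref{alg:greedy_delay}) on the bottom-up enumeration, using the Giyora--Kaplan orthogonal ray-shooting structure \cite{GiyoraK09} as the container for the set $F$ of free vertices. Correctness is inherited immediately from Theorem~\ref{thm:RDVstronglyChordal} together with Theorem~\ref{thm:RDVgeometric} and the observation reducing operation $C'$ to the ray-shooting query $C''$, so only the run-time really needs to be argued.

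The first step is geometric preprocessing on the host-tree $T$. A single post-order traversal produces $x(\cdot)$, $\ell(\cdot)$, and $r(\cdot)$ for every node; a BFS traversal produces $y(\cdot)$. Both cost $O(|T|)$. Using the stored references to $b(v)$ and $t(v)$, I then read off $s(v)$ and $q(v)$ for every vertex $v$ in $O(1)$ time each, for $O(n)$ in total. Because the $y$-values are integers in $[0,|T|)$, bucket-sorting vertices by $y(t(v))$ yields the bottom-up enumeration in $O(|T|+n)$ time. Finally, to satisfy the disjointness precondition of the ray-shooting data structure, I perturb $s(v_i)$ by adding $(n-i)/n$ to its $y$-coordinate; since the original endpoints have integer $y$-coordinates, this leaves the outcome of every intersection query relevant to Theorem~\ref{thm:RDVgeometric} unchanged.

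The main loop then iterates over $v_1,\ldots,v_n$ in bottom-up order. At step $j$ I perform one ray-shooting query with $\vec q(v_j)$; by the observation this either reports the segment $s(v_i)\in F$ with maximum $y$-coordinate among those hit (equivalently, the minimum-index free neighbour of $v_j$), or reports that no such segment exists. In the former case I add $(v_i,v_j)$ to $M$ and delete $s(v_i)$ from the structure; in the latter I insert $s(v_j)$. Every iteration thus performs at most one query and one update on the Giyora--Kaplan structure, each costing $O(\log n)$, for a total loop cost of $O(n\log n)$.

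Adding preprocessing and loop gives $O(|T|+n\log n)$. The only non-routine point is the segment-disjointness issue, which is resolved by the $O(1)$-per-segment perturbation above; everything else is assembly of pieces already established (Theorems~\ref{thm:RDVstronglyChordal} and~\ref{thm:RDVgeometric}, the observation, and the black-box ray-shooting data structure), so no further combinatorial matching argument is needed.
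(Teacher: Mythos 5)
Your proposal matches the paper's own proof essentially step for step: bottom-up order via bucket sort in $O(|T|+n)$ time, correctness from Theorem~\ref{thm:RDVstronglyChordal}, reduction of operations A--C to A', B', C'' via Theorem~\ref{thm:RDVgeometric} and the observation, and $O(\log n)$ per operation using the Giyora--Kaplan structure. The only difference is that you fold the segment-disjointness perturbation into the proof itself (the paper handles it in the surrounding text), which is a reasonable presentational choice.
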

\begin{proof}
Parse $T$ to compute the $x$-coordinates and $y$-coordinates of all nodes in $T$, then 
bucket-sort the vertices by decreasing $y(t(v))$ to obtain the
bottom-up order $v_1,\dots,v_n$ in $O(|T|+n)$ time.   By Theorem~\ref{thm:RDVstronglyChordal} 
applying the greedy-algorithm with this vertex-ordering will give a maximum matching.    Using the
delayed greedy-algorithm, the run-time of the algorithm is reduced to performing operations A-C 
$O(n)$ times.    By storing the free set $F$ as horizontal segments, this by Theorem~\ref{thm:RDVgeometric}
is the same as performing operations A', B' and C'' $O(n)$ times. Using a suitable data structure for orthogonal ray shooting \cite{GiyoraK09}, this takes $O(\log n)$ time per operation and hence $O(n\log n)$ time in total.
\end{proof}

One can easily argue that any RDV graph has an RDV representation $T$ with $|T|\in O(n)$,
for otherwise two adjacent nodes of $T$ are used by the same set of subtrees and could be combined into one.
So the run-time becomes $O(n\log n)$
if a suitably small RDV representation is given. 

\medskip
Recall that for interval graphs, an improvement of the run-time for matching from $O(n\log n)$ 
to $O(n\log\log n)$ is possible by exploiting that all intervals can be described via integers
in $O(n)$ and storing $F$ using van Emde Boas trees \cite{Rhee1995}.
This naturally raises an open question:    Could the run-time of Theorem~\ref{thm:RDVmain} also be improved to $O(n\log\log n)$ time, 
presuming $|T|\in O(n)$?   The bottleneck for this would be to improve the run-time for
the orthogonal ray-shooting data structure if all coordinates are (small) integers.  
This question was explicitly asked by Giyora and Kaplan \cite{GiyoraK09}, and appears to be still open.    Could we at least achieve run-time $O(n (\log\log n)^k)$ for some constant $k$ for RDV graphs?

\section{Clique trees where subtrees have few leaves}
\label{sec:AdditionalResults}

\label{sec:fewLeaves}

An RDV graph is a chordal graph with a rooted clique-tree where every subtree $T(v)$ has exactly one leaf.   
A natural generalization of this graph class are the chordal graphs with a rooted clique-tree where
every subtree $T(v)$ has at most $\Delta$ leaves.   (A very similar concept was introduced by
Chaplick and Stacho under the name of \emph{vertex leafage} \cite{ChaplickS14}; the only difference is that they considered 
unrooted clique trees and so count the root of $T(v)$ as leaf if it has degree 1.)

\begin{theorem}
Let $G$ be a graph with a rooted clique-tree $T$ where all subtrees have at most $\Delta$ leaves.    
Then the greedy-algorithm applied to the bottom-up enumeration can be implemented in $O(|T|+\Delta n\log n)$ time.
\end{theorem}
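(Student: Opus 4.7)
The plan is to mirror the RDV approach, exploiting that any subtree $T(v)$ with at most $\Delta$ leaves decomposes into at most $\Delta$ downward paths.   Concretely, let $b_1(v),\dots,b_{\Delta_v}(v)$ be the leaves of $T(v)$ with $\Delta_v\leq\Delta$, and let $P_k(v)$ denote the downward path from $t(v)$ to $b_k(v)$, so that $T(v)=\bigcup_k P_k(v)$.   I would keep the horizontal segment $s(v)$ defined exactly as in Section~\ref{sec:RDV} (it depends only on $t(v)$, not on the shape of $T(v)$), while for each leaf $b_k(v)$ I would introduce a vertical query segment $q_k(v)$ between $b_k(v)$ and $(x(b_k(v)),y(t(v)))$, exactly analogous to $q(v)$.

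The key step is to generalize Theorem~\ref{thm:RDVgeometric}:  for $i<j$ in the bottom-up order, $(v_i,v_j)$ is an edge iff $s(v_i)\cap q_k(v_j)\neq\emptyset$ for some $k$.   The bottom-up ordering still forces any topmost common node of $T(v_i)$ and $T(v_j)$ to equal $t(v_i)$, so the edge exists iff $t(v_i)\in T(v_j)$, and since $T(v_j)=\bigcup_k P_k(v_j)$ this happens iff $t(v_i)\in P_k(v_j)$ for some $k$.   The equivalence ``$t(v_i)\in P_k(v_j)$ iff $s(v_i)\cap q_k(v_j)\neq\emptyset$'' then follows by literally copying the proof of Theorem~\ref{thm:RDVgeometric} with $P_k(v_j)$ and $b_k(v_j)$ in place of $P(v_j)$ and $b(v_j)$; that proof only used the fact that $P(v_j)$ is a downward path ending at $b(v_j)$.

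With this characterization, I would run the delayed-greedy algorithm on the bottom-up order, maintaining $\{s(v_i):v_i\in F\}$ in the Giyora--Kaplan data structure (with the small $y$-perturbation from Section~\ref{sec:RDV} so that segments are disjoint).    When processing $v_j$ I would fire the $\Delta_j\leq \Delta$ upward rays $\vec{q}_1(v_j),\dots,\vec{q}_{\Delta_j}(v_j)$, collect the at most $\Delta_j$ returned candidates, and choose the one with the largest $y(t(\cdot))$ --- equivalently the smallest index --- as the partner for $v_j$.    This is correct because among all $v_i\in F$ adjacent to $v_j$ the smallest index is the one with largest $y(t(v_i))$, and by the same reasoning used just before Theorem~\ref{thm:RDVmain} each ray correctly reports the largest-$y$ segment it hits (all segments in $F$ lie at $y\geq y(t(v_j))$, hence within the vertical extent of the corresponding $q_k(v_j)$).

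For the runtime, computing the $x,y$-coordinates of $T$, collecting each $t(v)$ together with the at most $\Delta$ leaves of $T(v)$, and bucket-sorting by $y(t(v))$ all take $O(|T|+n)$ time.   The algorithm then performs $O(n)$ insertions/deletions and $O(\Delta n)$ ray-shooting queries, each in $O(\log n)$ time, yielding the claimed $O(|T|+\Delta n\log n)$ bound.    The main subtlety I expect is confirming that a single horizontal segment $s(v_i)$ per free vertex still suffices, so that the factor $\Delta$ appears only on the query side; this is exactly what the generalized Theorem~\ref{thm:RDVgeometric} above guarantees, since $s(v_i)$ is defined purely in terms of $t(v_i)$ and ignores the shape of $T(v_i)$.
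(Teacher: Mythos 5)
Your proposal is correct and follows essentially the same route as the paper: decompose each $T(v)$ into at most $\Delta$ root-to-leaf downward paths, keep the single horizontal segment $s(v)$ (which depends only on $t(v)$), introduce one vertical query segment per leaf, and answer operation C by firing $\Delta$ upward ray-shooting queries and keeping the hit of maximum $y$-coordinate. The paper states the generalized adjacency characterization without proof (``one easily verifies''), whereas you spell out why it reduces to the path case of Theorem~\ref{thm:RDVgeometric}; otherwise the arguments coincide, including the $O(|T|+\Delta n\log n)$ accounting.
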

\begin{proof}
For each vertex $v$, split $T(v)$ into $k\leq \Delta$ paths $P_1(v),\dots,P_k(v)$, each connecting the root $t(v)$ of $T(v)$ to a leaf, such that their union covers all of $T(v)$.    Define segment $s(v)$ as before (it only depends on $t(v)$), and define $k$ query-segments $q_1(v),\dots,q_k(v)$ for the paths.   One easily verifies that for $i<j$ vertex $v_j$ is a neighbour of $v_i$ if and only if at least one of $q_1(v_j),\dots,q_k(v_j)$ intersects $s(v_i)$.    So to perform operation C, we do a ray-shooting query for each of $\vec{q_1}(v_j),\dots,\vec{q_k}(v_j)$ and
choose among the returned segments (if any) the one that has maximum $y$-coordinate.  With this operation C can be implemented in 
$O(\Delta \log n)$ time.    All other aspects of the greedy-algorithm are exactly as in Section~\ref{sec:RDV}.
\end{proof}

Unfortunately, this does not improve the time to find maximum matchings for such graphs, because there is no guarantee that the greedy-algorithm finds a maximum matching when applied with a bottom-up enumeration.
To see a specific example, consider the \emph{directed path graphs} (recall that these are obtained by requiring $T(v)$ to be a directed path after directing the clique-tree, but the edge-directions need not come from rooting the clique-tree).
This is a strict superclass of RDV graphs, for example the graph in Figure~\ref{fig:DV}, which is also known as \emph{4-trampoline}, is a directed path graph but not an RDV graph since it is not even strongly chordal \cite{Farber83}.
For any choice of root, every path $T(v)$ becomes a subtree with at most two leaves, and so the greedy-algorithm can be implemented in $O(n\log n)$ time (presuming the clique-tree was small).   Unfortunately, this does not necessarily give a maximum matching, see Figure~\ref{fig:DV}.

\begin{figure}[ht]\centering
\includegraphics[width=0.99\linewidth,page=1]{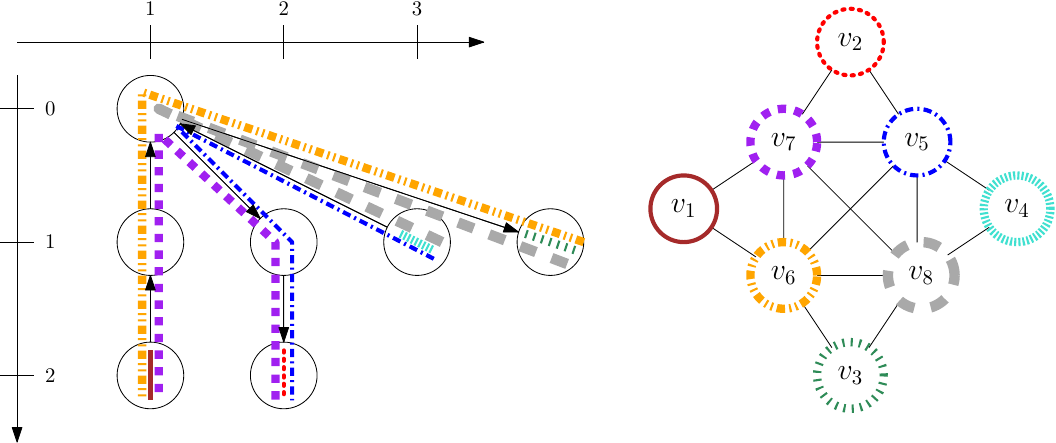}
\caption{A directed path graph that is not strongly chordal. With the depicted bottom-up enumeration, the greedy-algorithm would choose matching $(v_5,v_2)$, $(v_6,v_1)$, $(v_8,v_3)$ and leave $v_4,v_7$ unmatched even though the graph has a matching of size 4.}
\label{fig:DV}
\end{figure}

This raises another natural open problem:   Can we find a maximum matching in a directed path graph (with a given small clique-tree) in $O(n\log n)$ time?   
How about the path graphs, an even broader class?

%---------------------------- Bibliography -------------------------------

% Please add the contents of the .bbl file that you generate,  or add bibitem entries manually if you like.
% The entries should be in alphabetical order
\small
\bibliographystyle{abbrv} 
\bibliography{ref} % Entries are in the refs.bib file

\newpage
\section*{Appendix}
\section{Proof of Theorem~\ref{thm:RDVstronglyChordal}}

To prove the theorem, we first need some definitions.   Write $N[v]$ for the \emph{closed neighbourhood} of a vertex, i.e. the set consisting of  $v$ and all its neighbours.   Call a vertex $v$ \emph{simple} \cite{Farber83} if
$N[v]$ is a clique 
that can be ordered as $w_1,\dots,w_k$ such that $N[w_1]\subseteq N[w_2]\subseteq \dots \subseteq N[w_k]$.   The crucial ingredient is the following observation:

\begin{lemma}
Let $G$ be a graph with an RDV representation. Then a vertex $v_1$  that maximizes $y(t(v_1))$ is simple.
\end{lemma}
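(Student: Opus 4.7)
Let $u = t(v_1)$, so by hypothesis $y(u)$ is maximum among all tops. My first step is to extract the key structural consequence of this maximality: no vertex $w$ of $G$ has $t(w)$ strictly below $u$ in the tree, because that would give $y(t(w))>y(u)$. In particular, any path $P(w)$ that enters the subtree rooted at $u$ must do so from above, and hence must pass through $u$ itself.

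Using this, I would identify $N[v_1]$ explicitly. Since $v_1$ is adjacent to $w$ iff $P(v_1)\cap P(w)\neq\emptyset$ and $P(v_1)$ contains $u$, the previous observation gives
\[
   N[v_1] \;=\; \{\,w : u\in P(w)\,\}.
\]
This set is automatically a clique: any two such paths share $u$, so their corresponding vertices are adjacent.

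The main work is showing that $N[v_1]$ admits a nested chain of closed neighborhoods. For any $w\in N[v_1]$, decompose $P(w)$ into its \emph{upper part} from $t(w)$ down to $u$ and its \emph{lower part} from $u$ down to $b(w)$. I claim
\[
   N[w] \;=\; N[v_1]\;\cup\;\{\,w' : P(w')\text{ meets the upper part of }P(w)\text{ above }u\,\}.
\]
The $\supseteq$ direction is immediate. For $\subseteq$, take $w'\in N[w]$ and pick a node $x\in P(w')\cap P(w)$. If $x=u$ or $x$ is a strict descendant of $u$, then $P(w')$ meets the subtree rooted at $u$; by the maximality consequence above, $P(w')$ must pass through $u$, giving $w'\in N[v_1]$. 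Otherwise $x$ lies strictly above $u$ on $P(w)$, which is the second case. This is the step I expect to be the main obstacle — everything hinges on ruling out the "bottom-only intersection" case via maximality.

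Finally, I would order $N[v_1]$. Since every $w\in N[v_1]$ has $t(w)$ on the root-to-$u$ path, I sort these vertices by decreasing $y(t(w))$, so vertices with the deepest $t(w)$ come first. Then for $i<j$, the node $t(w_j)$ is an ancestor of $t(w_i)$ on this path, so the upper part of $P(w_i)$ (above $u$) is contained in the upper part of $P(w_j)$. Combined with the formula for $N[w]$, this gives $N[w_i]\subseteq N[w_j]$, so $N[w_1]\subseteq N[w_2]\subseteq\cdots\subseteq N[w_k]$, which is exactly what simplicity requires.
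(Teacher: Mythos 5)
Your proof is correct and follows essentially the same route as the paper's: maximality of $y(t(v_1))$ forces every neighbour's path through $u=t(v_1)$, the tops of these paths all lie on the root-to-$u$ path and hence nest when sorted by decreasing depth, and the portions of the paths below $u$ are irrelevant to adjacency. The only cosmetic difference is that the paper handles this last point by truncating the representation at depth $y(u)$ and then invoking $P(w_i)\subseteq P(w_j)\Rightarrow N[w_i]\subseteq N[w_j]$, whereas you argue it in place via your explicit formula for $N[w]$.
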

\begin{proof}
Since $v_1$ maximizes $y(t(v_1))$, node $t(v_1)$ must belong to $P(w)$ for any neighbour $w$ of $v$.   This shows immediately that $N[v_1]$ is a clique since all subtrees of neighbours share $t(v_1)$.   

Now remove all nodes from the RDV representation that have $y$-coordinate strictly bigger than $y(t(v_1))$; by choice of $v_1$ this does not remove any adjacencies.   
If we now sort the neighbours of $v$ as $w_1,\dots,w_k$ by decreasing $y$-coordinate of their top endpoints, then (since all subtrees are downward paths that end at $t(v_1)$) we have $P(w_1)\subseteq \dots \subseteq P(w_k)$ and so $v_1$ is simple.
\end{proof}

We can view a bottom-up elimination order $v_1,\dots,v_n$ as using a vertex $v$ that maximizes $y(t(v))$ as $v_1$, removing it from the graph, and repeating until the graph is empty.    By the above, then each $v_i$ is a simple vertex with respect to the graph induced by $\{v_i,\dots,v_n\}$. Farber \cite[Theorem 3.3]{Farber83} showed that a vertex order with this property is a strong elimination order, and as mentioned earlier, using a strong elimination order guarantees that the greedy-algorithm for matching succeeds \cite{DAHLHAUS199879}.
%\todo[color=green]{Farber says `strong elimination order`, DahlhausK say `strong perfect elimination order`.   I'm going for the shorter one.}

\section{The graph of Figure~\ref{fig:RDV_example}}

We claimed earlier that the graph $G$ in Figure~\ref{fig:RDV_example} is neither a circular arc graph nor a permutation graph, and we briefly argue this here.
We repeat the graph here for convenience.

\begin{figure}[ht]
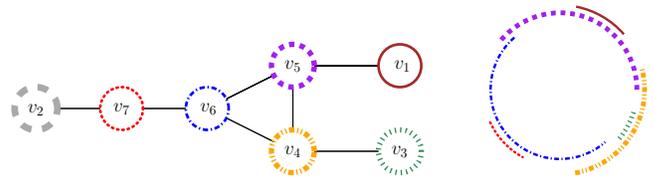

\includegraphics[width=0.65\linewidth,page=3]{RDV_notInterval.pdf}\hspace*{\fill}
\includegraphics[width=0.25\linewidth,page=4]{RDV_notInterval.pdf}
\caption{The graph $G$ of Figure~\ref{fig:RDV_example} and a circular arc representation of $G\setminus \{v_2\}$.}
\end{figure}

Most of our argument considers only the graph $G\setminus \{v_2\}$.   This is well-known not to be a comparability graph \cite[Figure 5.1]{Gol80},
and since permutation graphs are subgraphs of comparability graphs and closed under vertex-deletion, $G$ is not a permutation graph.

Next observe that  vertices
$\{v_1,v_3,v_7\}$ form what is known as an \emph{asteroidal triple}: any two of them can be connected via a path that avoids the neighbourhood of the third.   
No such structure can exist in an interval graph.    
In fact, $G\setminus \{v_2\}$ is known to be an obstruction for \emph{Helly circular-arc graphs} \cite{LinS09}, i.e., it does not have a circular arc representation where for every clique $C$ the arcs of vertices in $C$ all share a common point.   Since $\{v_4,v_5,v_6\}$ is the only non-trivial clique, therefore in any circular arc representation of $G\setminus \{v_2\}$ the three arcs of $v_4,v_5,v_6$ do not share a common point.   To still have pairwise intersections, these three arcs together cover the entirety of the circle.    But then we cannot add an arc for $v_2$ anywhere since it has no edge to any of these three vertices.    Therefore $G$ is not a circular arc graph.

\end{document}